\def\BibTeX{{\rm B\kern-.05em{\sc i\kern-.025em b}\kern-.08em
    T\kern-.1667em\lower.7ex\hbox{E}\kern-.125emX}}
\newtheorem{definition}{Definition}
\newtheorem{lemma}{Lemma}
\newtheorem{theorem}{Theorem}
\newtheorem{corollary}{Corollary}
\newcommand{\Ri}[0]{\mathbb{R}_{\geq 0}^{\infty}}
\begin{document}

\title{Bounding Distance Between Outputs in Distributed Lattice Agreement}

\author{\IEEEauthorblockN{Abdullah Rasheed}
\IEEEauthorblockA{\textit{The University of Texas at Austin}\\
Austin, TX, USA}
\and
\IEEEauthorblockN{Nidhi Dubagunta}
\IEEEauthorblockA{\textit{The University of Texas at Austin}\\
Austin, TX, USA}
}

\maketitle

\begin{abstract}
This paper studies the lattice agreement problem and proposes a stronger form, $\varepsilon$-bounded lattice agreement, that enforces an additional tightness constraint on the outputs. To formalize the concept, we define a quasi-metric on the structure of the lattice, which captures a natural notion of distance between lattice elements. We consider the bounded lattice agreement problem in both synchronous and asynchronous systems, and provide algorithms that aim to minimize the distance between the output values, while satisfying the requirements of the classic lattice agreement problem. We show strong impossibility results for the asynchronous case, and a heuristic algorithm that achieves improved tightness with high probability, and we test an approximation of this algorithm to show that only a very small number of rounds are necessary.

\end{abstract}

\begin{IEEEkeywords}
Lattice agreement, fault-tolerance, consensus.
\end{IEEEkeywords}

\section{Introduction}
The Lattice Agreement problem, introduced in \cite{attiya95}, is a weakened form of the distributed decision problem. Each process begins with a value from a lattice, and the goal is for all processes to decide on values that are mutually comparable. Lattice agreements loosens the constraints of the consensus problem, which requires strict uniformity and validity in the output values and is impossible in asynchronous systems \cite{flp}. Lattice agreement has the advantage of decidability in the presence of asynchrony and faults \cite{attiya95}, making it an important and useful problem in distributed systems.

Lattice agreement has a variety of applications in practice, and was first proposed as a method for attaining consistent atomic snapshots \cite{attiya95}. A generalized form of the lattice agreement problem was later proposed to implement fault-tolerant replicated state machines \cite{gla}. The generalized lattice agreement \cite{gla} is also shown to be useful in achieving linearizable data structures, particularly in conflict-free replicated data types. Another version of lattice agreement, reconfigurable lattice agreement, was proposed in \cite{rla} in order to handle agreement on the system configuration for client and server consistency. 

While better bounds on lattice agreement algorithms were found in \cite{bound}, they were improved upon in \cite{Zheng2018Lattice} with algorithms for synchronous, asynchronous, and generalized lattice agreement.

In this paper, we propose a new constraint, tightness, on standard lattice agreement that requires all outputs to be within a certain distance of each other as determined by a quasi-metric over the lattice. Such a requirement can force processes to agree on values (such as the global view in atomic snapshot) that are closer to each other. Minimizing this distance between intermediate states in applications such as atomic snapshots and replicated state machines could lead to faster eventual convergence. Reducing the divergence between replicas during the reconciliation phase may mean that fewer overall synchronization operations are needed, allowing the system to reach eventual consistency more efficiently. Additionally processes can potentially recover more quickly in the case of node crashes, because it reduces the risk of losing information from data loss. It can also ensure that processes are ``synchronized enough'' with respect to what they must agree on, in a similar notion as clock synchronization.

We show that this problem has equivalences to standard lattice agreement under particular bounds, and that there is a discrete structure to the range of tightness on the outputs, despite the continuity of the range on the quasi-metric. We briefly show that this problem is completely solved in the synchronous case, but is much trickier in the asynchronous case. While standard lattice agreement is solvable asynchronously, better tightness can never be guaranteed. This motivates our heuristic approach, which we provide an approximate model of and simulate to get approximate probabilities of improved tightness. We see that after only 5 rounds of our algorithm, the probability converges to a near guarantee. Finally, we conjecture that this number of rounds is constant, no matter the number of processes, faults, or the chance of a fault.

\section{Background}

We will let $[n]$ denote the set $\{1,\ldots,n\}$. We will also let $\Ri = \{r \in \mathbb{R} \mid r \geq 0\} \cup \{\infty\}$, where $r < \infty$ for all $r \in \Ri \setminus \{\infty\}$.

A triple $(X,\leq,\sqcup)$ is called a \emph{join semi-lattice} if $(X,\leq)$ is a poset, and $\sqcup$ is a join operator such that the join of every non-empty finite subset is defined. For brevity, we will refer to join semi-lattices simply as lattices. We say $X$ is \emph{grounded} if there exists an element $\bot_X \in X$ such that $\bot_X \leq x$ for all $x \in X$. For any subset $S \subseteq X$, we let $y \bowtie S$ mean $\exists s \in S : s \leq y \leq \bigsqcup S$.

With any lattice $X$, we will feel free to say $a < b$ for $a,b \in X$ to mean $a \leq b \wedge a \neq b$.

\section{System Model}

We consider a system model consisting of $n$ distributed processes, in which at most $f$ faults may occur. We only consider faults on processes, which suffer crash faults.

We will consider the system to be either synchronous or asynchronous, and each process can send and receive messages from every other process including itself.

\section{Distributed Lattice Agreement}

The original lattice agreement is as follows. Let $(X,\leq,\sqcup)$ be a finite lattice, and let there be $n$ processes in the system. Each process $p_i$ proposes an input value $x_i \in X$, and each $p_i$ decides on an output $y_i \in X$. To solve the lattice agreement problem, the following constraints must be met:
\begin{itemize}
    \item {\bf Downward-Validity}: $\forall i\in [n] : x_i \leq y_i$
    \item {\bf Upward-Validity}: $\forall i \in [n] : y_i \leq \bigsqcup_{j \in [n]}x_j$
    \item {\bf Comparability}: $\forall i,j \in [n] : y_i \leq y_j \vee y_j \leq y_i$
\end{itemize}

We introduce a stronger version of this problem by requiring the outputs to all be ``close'' to each other with respect to some distance function on the lattice.

\begin{definition}
    Let $(X,\leq,\sqcup)$ be a lattice. Then a function $\delta_X : X\times X \rightarrow \mathbb{R}_{\geq 0}^{\infty} \cup \{\bot\}$ is called a \emph{quasi-metric} over $X$ if for all $x_1,x_2,x_3 \in X$,
    \begin{enumerate}[(i)]
        \item $x_1 = x_2 \Leftrightarrow \delta_X(x_1,x_2) = 0$
        \item $x_1 \leq x_2 \Leftrightarrow \delta_X(x_1,x_2) \neq \bot$
        \item $x_1 \leq x_2 \leq x_3 \Rightarrow \delta_X(x_1,x_3) \leq \delta_X(x_1,x_2) + \delta_X(x_2,x_3)$
    \end{enumerate}
\end{definition}

\begin{definition}
    A lattice $(X,\leq,\sqcup)$ with a quasi-metric $\delta_X$ is called a \emph{lattice quasi-metric space}.
\end{definition}

In a lattice quasi-metric space, $\delta_X(x_1,x_2)$ can intuitively be described as how ``close'' $x_1$ and $x_2$ are in some natural sense. For example, in the lattice of sets of integers, we would naturally have the distance between $\{1\}$ and $\{1,2\}$ is smaller than the distance between $\{1\}$ and $\{1,\ldots,100\}$, or that the distance between $\{1\}$ and $\{1,2\}$ is smaller than the distance between $\{1\}$ and $\{1,1000\}$ (all depending on the context of the lattice). Observe that, while the length of the shortest path from $x_1$ to $x_2$ may work as a quasi-metric, this is not \emph{the} quasi-metric that one may want to use, as there may be paths in the lattice that should naturally be considered longer than others (in the sense of distance), even if the paths are the same length in the lattice.

We may also find it easier to work with more natural quasi-metrics, so we may often impose a requirement of a \emph{normal} quasi-metric.

\begin{definition}
    A quasi-metric $\delta_X$ over a lattice $(X,\leq,\sqcup)$ is said to be \emph{height-normal} (or simply, normal) if it satisfies
    \begin{itemize}
        \item $a\leq b\leq c \implies \delta_X(a,b),\delta_X(b,c) \leq \delta_X(a,c)$
    \end{itemize}
    for all $a,b,c \in X$.
\end{definition}

We now introduce a new version of the lattice agreement problem. The $\varepsilon$-\emph{bounded} lattice agreement problem (with $\varepsilon \geq 0$) is as follows. Let $(X,\leq,\sqcup)$ be a lattice quasi-metric space with $\delta_X$. Each process $p_i$ proposes an input value $x_i \in X$, and each $p_i$ decides on an output $y_i \in X$. To solve the $\varepsilon$-bounded lattice agreement, Downward-Validity, Upward-Valildity, and Comparability must all be met, along with one additional constraint:
\begin{itemize}
    \item {\bf $\varepsilon$-Tightness}: $\forall i,j \in [n] : (y_i \leq y_j \Rightarrow \delta_X(y_i,y_j) \leq \varepsilon)$
\end{itemize}
This forces the output values to be \emph{approximately} the same, i.e. ``close'' to each other. 

Later in the paper, we will be interested in studying the tightness of specific instances of the $\varepsilon$-bounded lattice agreement problem. For the following definitions, an instance of a protocol is a pair $(I,Y)$ of sets of input and output values, respectively. In context, $I = \{x_i \mid i \in [n]\}$ and $Y = \{y_i \mid i \in [n]\}$.

\begin{definition}
    An instance of a protocol of the lattice agreement problem is said to be $\gamma$-compliant if 
    \begin{enumerate}[(1)]
        \item The outputs satisfy $\gamma$-Tightness;
        \item There is no $\varepsilon < \gamma$ such that the outputs satisfy $\varepsilon$-Tightness.
    \end{enumerate}
\end{definition}

\begin{lemma}
    Any instance of any lattice agreement protocol is $\gamma$-compliant for a unique $\gamma \geq 0$ determined by $\gamma = \max_{i,j \in [n]} \delta_X(y_i,y_j)$. \label{gammaunique}
\end{lemma}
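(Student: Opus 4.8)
The plan is to split the statement into an existence part—the quantity $\gamma := \max_{i,j\in[n]}\delta_X(y_i,y_j)$ is well-defined and the instance is $\gamma$-compliant for this particular $\gamma$—and a uniqueness part—no value other than this $\gamma$ can witness $\gamma$-compliance. Both halves should fall out almost directly from the definition of $\gamma$-compliance, once we record the key structural fact that Comparability forces the outputs $\{y_i\}$ to form a chain.

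First I would observe that, because $(I,Y)$ is an instance of a lattice agreement protocol, the outputs satisfy Comparability, so for every $i,j\in[n]$ either $y_i\le y_j$ or $y_j\le y_i$. By axiom (ii) of a quasi-metric this gives $\delta_X(y_i,y_j)\neq\bot$ for all $i,j$, so $\gamma = \max_{i,j\in[n]}\delta_X(y_i,y_j)$ is a maximum of a finite nonempty subset of $\Ri$ and is therefore well-defined; moreover $\gamma\ge 0$, with $\gamma = 0$ exactly when all outputs coincide by axiom (i). I would be careful to read the maximum over \emph{ordered} pairs, since $\delta_X$ need not be symmetric.

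Next, for existence I would check the two conditions in the definition of $\gamma$-compliance directly. Condition (1), that the outputs satisfy $\gamma$-Tightness, is immediate: if $y_i\le y_j$ then $\delta_X(y_i,y_j)\le \max_{k,l}\delta_X(y_k,y_l)=\gamma$. For condition (2), choose a pair $(i^*,j^*)$ attaining the maximum, so $\delta_X(y_{i^*},y_{j^*})=\gamma\neq\bot$; axiom (ii) then yields $y_{i^*}\le y_{j^*}$, so any $\varepsilon$ for which $\varepsilon$-Tightness holds must satisfy $\delta_X(y_{i^*},y_{j^*})\le\varepsilon$, i.e. $\varepsilon\ge\gamma$. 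Hence no $\varepsilon<\gamma$ witnesses $\varepsilon$-Tightness, and the instance is $\gamma$-compliant.

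Finally, for uniqueness, suppose the instance is both $\gamma_1$-compliant and $\gamma_2$-compliant. By condition (1) of $\gamma_2$-compliance the outputs satisfy $\gamma_2$-Tightness, so by condition (2) of $\gamma_1$-compliance we cannot have $\gamma_2<\gamma_1$; swapping the roles of $\gamma_1$ and $\gamma_2$ rules out $\gamma_1<\gamma_2$, forcing $\gamma_1=\gamma_2$, which must then equal the $\gamma$ exhibited above. I do not expect a real obstacle in this argument; the only spot demanding attention is the well-definedness of the maximum, where it is essential to invoke Comparability together with axiom (ii) so that $\bot$ never appears among the distances being maximized.
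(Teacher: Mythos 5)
Your overall argument is the same as the paper's: take $\gamma$ to be the maximum pairwise distance, note $\gamma$-Tightness holds by definition of the maximum, use a maximizing pair to rule out $\varepsilon$-Tightness for $\varepsilon<\gamma$, and conclude uniqueness (your two-witness uniqueness argument is a slightly more formal packaging of what the paper does, but not a different route). There is, however, one step that is wrong as stated, and it is precisely the step you flagged as the delicate one. You claim that Comparability plus axiom (ii) gives $\delta_X(y_i,y_j)\neq\bot$ for \emph{all} ordered pairs $(i,j)$, and you insist on reading the maximum over ordered pairs. But axiom (ii) is a biconditional: if $y_i < y_j$ strictly, then $y_j\le y_i$ fails, so $\delta_X(y_j,y_i)=\bot$. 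Comparability only guarantees that for each unordered pair at least one of the two directed distances is non-$\bot$. So the maximum over all ordered pairs does contain $\bot$ entries whenever the outputs are not all equal, and the paper gives no order relation between $\bot$ and the reals. The fix is trivial and is what the paper implicitly intends: take the maximum only over pairs with $y_i\le y_j$ (equivalently, ignore $\bot$); this set is nonempty since it contains the diagonal pairs with distance $0$, which also gives $\gamma\ge 0$. The rest of your argument, including the observation that a maximizing pair $(i^*,j^*)$ automatically satisfies $y_{i^*}\le y_{j^*}$ by axiom (ii) and hence defeats any $\varepsilon<\gamma$, is correct.
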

\begin{proof}
    Let $\gamma = \max_{i,j \in [n]} \delta_X(y_i,y_j)$ (which must satisfy $\gamma \geq 0$, since the outputs form a chain by Comparability). Then the outputs certainly satisfy $\gamma$-Tightness by definition, and if $\varepsilon < \gamma$, then the $i,j$ that maximize $\gamma$ (i.e. the $i,j$ such that $\gamma = \delta_X(y_i,y_j)$) do not satisfy $\delta_X(y_i,y_j) \leq \varepsilon$, so $\varepsilon$-Tightness is not satisfied, and so the instance is $\gamma$-compliant. Furthermore, this means the instance is not $\varepsilon$-compliant for any $\varepsilon < \gamma$. If $\varepsilon > \gamma$, the instance is clearly not $\varepsilon$-compliant by Condition 2. Thus, $\gamma$ is unique.
\end{proof}

We also prove that, for normal $\delta_X$, $\gamma$ is determined by the min and max outputs.

\begin{lemma}
    For any instance of the lattice agreement problem and a normal $\delta_X$, let $Y = \{y_i \mid i\in [n]\}$, and $\gamma = \delta_X(\min{Y}, \max{Y})$. Then, this instance is $\gamma$-compliant. \label{normgamma}
\end{lemma}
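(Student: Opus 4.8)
The plan is to combine Lemma~\ref{gammaunique} with the normality of $\delta_X$. By Lemma~\ref{gammaunique}, the instance is $\gamma'$-compliant for the unique value $\gamma' = \max_{i,j\in[n]} \delta_X(y_i,y_j)$, so it suffices to show that this maximum equals $\delta_X(\min Y, \max Y)$. By Comparability the outputs form a chain, so $\min Y$ and $\max Y$ are well-defined elements of $Y$, and for every pair $i,j$ with $y_i \leq y_j$ we have $\min Y \leq y_i \leq y_j \leq \max Y$.

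The key step is then a direct application of height-normality: from $\min Y \leq y_i \leq \max Y$ and normality we get $\delta_X(y_i,\max Y) \leq \delta_X(\min Y, \max Y)$, and similarly from $\min Y \leq y_j \leq \max Y$ (or, more carefully, iterating normality along the chain $\min Y \leq y_i \leq y_j \leq \max Y$) we obtain $\delta_X(y_i,y_j) \leq \delta_X(\min Y, \max Y)$. Hence $\delta_X(\min Y,\max Y)$ is an upper bound for all the pairwise distances, and since the pair $(\min Y, \max Y)$ is itself among those considered, the maximum is attained there. Therefore $\gamma' = \delta_X(\min Y, \max Y) = \gamma$, and by Lemma~\ref{gammaunique} the instance is $\gamma$-compliant.

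The one point that needs care — and the only place the argument could stumble — is extracting $\delta_X(y_i,y_j) \leq \delta_X(\min Y, \max Y)$ for an \emph{interior} pair $y_i \leq y_j$, rather than just for pairs of the form $(\text{something},\max Y)$ or $(\min Y,\text{something})$. The clean way is to note that normality applied to $y_i \leq y_j \leq \max Y$ gives $\delta_X(y_i,y_j) \leq \delta_X(y_i,\max Y)$, and normality applied to $\min Y \leq y_i \leq \max Y$ gives $\delta_X(y_i,\max Y) \leq \delta_X(\min Y,\max Y)$; chaining these two inequalities yields the bound. One should also check the degenerate case $|Y| = 1$, where $\min Y = \max Y$ and $\gamma = \delta_X(\min Y,\max Y) = 0$ by axiom (i), which is consistent. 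I expect no real obstacle here; the content is entirely in unwinding the definition of normality correctly.
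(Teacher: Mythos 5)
Your proposal is correct and follows essentially the same route as the paper's proof: both invoke Lemma~\ref{gammaunique}, use Comparability to get $\min Y$ and $\max Y$, and chain two applications of normality along $\min Y \leq y_i \leq y_j \leq \max Y$ to bound every pairwise distance by $\delta_X(\min Y, \max Y)$. The extra care you take with the interior-pair case and the degenerate $|Y|=1$ case is exactly the right check, and matches the paper's two-inequality chain $\delta_X(y_k,y_\ell) \leq \delta_X(y_k,\max Y) \leq \delta_X(\min Y,\max Y)$.
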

\begin{proof}
    First observe that $\min{Y},\max{Y}$ exists by Comparability. Let $y_i = \min{Y}$ and $y_j = \max{Y}$, and let $y_k,y_{\ell} \in Y$ such that $y_k \leq y_{\ell}$. Then $y_i \leq y_k \leq y_{\ell} \leq y_j$, so by normality of $\delta_X$, $\delta_X(y_i,y_j) \geq \delta_X(y_k,y_j) \geq \delta_X(y_k,y_{\ell})$. Thus, by Lemma \ref{gammaunique}, $\gamma = \delta_X(y_i,y_j) = \delta_X(\min{Y},\max{Y})$ (since $y_i,y_j$ have the maximum distance among the outputs as just shown).
\end{proof}

\subsection{Equivalences}



In the following lemmas we present equivalences between different instances of the $\varepsilon$-bounded lattice agreement problem and one result displaying an equivalence with the original lattice agreement problem.

An obvious equivalence is that a protocol which guarantees tighter bounds can also be used to solve a bounded lattice agreement problem for looser bounds.

\begin{lemma}
    Let $\varepsilon' \geq \varepsilon$. Then any protocol that solves the $\varepsilon$-bounded lattice agreement problem can be used to solve the $\varepsilon'$-bounded lattice agreement problem.\label{uppclosureequiv}
\end{lemma}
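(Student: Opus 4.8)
The plan is to show that the \emph{same} protocol, run without any modification, already solves the $\varepsilon'$-bounded problem. Since both the $\varepsilon$- and $\varepsilon'$-bounded lattice agreement problems are posed over the same lattice quasi-metric space $(X,\leq,\sqcup)$ with the same $\delta_X$ and the same proposed inputs $x_i$, a run of the $\varepsilon$-protocol on inputs $\{x_i\}$ yields outputs $\{y_i\}$ satisfying Downward-Validity, Upward-Validity, and Comparability; these three constraints are literally identical in the $\varepsilon'$-bounded problem, so nothing needs to be checked for them.

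The only constraint that differs is the tightness bound, so the argument reduces to deriving $\varepsilon'$-Tightness from $\varepsilon$-Tightness. Fix $i,j \in [n]$ with $y_i \leq y_j$. By quasi-metric property (ii), $\delta_X(y_i,y_j) \neq \bot$, so $\delta_X(y_i,y_j) \in \mathbb{R}_{\geq 0}^{\infty}$, and $\varepsilon$-Tightness gives $\delta_X(y_i,y_j) \leq \varepsilon$. Combining with the hypothesis $\varepsilon \leq \varepsilon'$ yields $\delta_X(y_i,y_j) \leq \varepsilon'$, which is exactly $\varepsilon'$-Tightness. Hence every run of the $\varepsilon$-protocol is also a valid run of the $\varepsilon'$-protocol, establishing the claim.

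There is no real obstacle here: the lemma is an immediate monotonicity observation, and the "reduction" is the identity transformation on both inputs and outputs. The only point worth stating carefully is that the inequality chaining happens in $\mathbb{R}_{\geq 0}^{\infty}$ rather than risking a comparison with $\bot$, which is why one invokes property (ii) to confirm the relevant distances are genuine (possibly infinite) nonnegative reals before transitively bounding them by $\varepsilon'$.
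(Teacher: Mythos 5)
Your proof is correct and follows essentially the same route as the paper's: reuse the same protocol unchanged, note the first three conditions are identical, and chain $\delta_X(y_i,y_j) \leq \varepsilon \leq \varepsilon'$. The extra remark invoking quasi-metric property (ii) to rule out $\bot$ is a small but harmless refinement over the paper's version.
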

\begin{proof}
    Downward-Validity, Upward-Validity, and Comparability are all satisfied by the $\varepsilon$-bounded protocol. By $\varepsilon$-Tightness, $\delta_X(y_i,y_j) \leq \varepsilon$ for all $y_i \leq y_j$, and since $\varepsilon \leq \varepsilon'$, we have $\delta_X(y_i,y_j) \leq \varepsilon'$, so $\varepsilon'$-Tightness is also satisfied.
\end{proof}

The following lemma shows that the $\varepsilon$-bounded lattice agreement problem for a $\varepsilon$ less than the minimum distance between any pair of elements in the lattice is the same as the 0-bounded lattice agreement problem.

\begin{lemma}
    Let $\varepsilon > 0$ such that $x < y \implies \delta_X(x,y) > \varepsilon$. The $\varepsilon$-bounded lattice agreement problem is equivalent to the 0-bounded lattice agreement problem. \label{0equiv}
\end{lemma}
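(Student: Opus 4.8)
The plan is to show that, under the stated hypothesis on $\varepsilon$, the constraint $\varepsilon$-Tightness is satisfied by an output tuple \emph{exactly} when $0$-Tightness is; since Downward-Validity, Upward-Validity, and Comparability are common to both problems, this makes the sets of admissible outputs identical, and hence the two problems are solved by exactly the same protocols.

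One direction is immediate: since $\varepsilon > 0 = 0$, Lemma~\ref{uppclosureequiv} already gives that any protocol solving the $0$-bounded problem solves the $\varepsilon$-bounded problem. For the converse, I would take an output tuple $(y_1,\dots,y_n)$ satisfying Comparability and $\varepsilon$-Tightness and fix arbitrary $i,j$. By Comparability we may assume $y_i \leq y_j$, and by quasi-metric property (ii) the quantity $\delta_X(y_i,y_j)$ is a genuine element of $\Ri$ (not $\bot$), so the inequality $\delta_X(y_i,y_j) > \varepsilon$ is meaningful. If we had $y_i < y_j$, the hypothesis $x < y \implies \delta_X(x,y) > \varepsilon$ would force $\delta_X(y_i,y_j) > \varepsilon$, contradicting $\varepsilon$-Tightness; hence $y_i = y_j$, and then property (i) gives $\delta_X(y_i,y_j) = 0 \leq 0$. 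Thus the tuple satisfies $0$-Tightness, and the feasible output sets of the two problems coincide, which is the claimed equivalence.

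The argument is short and I do not expect a real obstacle; the only point that needs a little care is the passage from ``$y_i,y_j$ comparable and unequal'' to ``$y_i < y_j$ or $y_j < y_i$'', so that the hypothesis actually applies, together with the observation that $\delta_X(y_i,y_j)\neq\bot$. As a remark (not needed for the proof, but explaining why the equivalence is natural) one can note that combining either tightness condition with Downward- and Upward-Validity pins every output to $\bigsqcup_{j\in[n]} x_j$, so both problems reduce to consensus on the join of the inputs.
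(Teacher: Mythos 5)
Your argument is correct and matches the paper's proof essentially step for step: the forward direction derives a contradiction from $\varepsilon$-Tightness and Comparability for any pair of unequal outputs, and the reverse direction is the trivial appeal to $\varepsilon > 0$ (Lemma~\ref{uppclosureequiv}). Your closing remark that the outputs are then pinned to $\bigsqcup_{j\in[n]} x_j$ is a correct bonus observation not made in the paper.
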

\begin{proof}
    Suppose there exists a protocol to solve the $\varepsilon$-bounded lattice agreement problem for such a $\varepsilon$. Then, by $\varepsilon$-Tightness, $\delta_X(y_i,y_j) \leq \varepsilon$ for all (correct) $i,j \in [n]$ with $y_i\leq y_j$.
    
    Suppose there is some $y_i \neq y_j$. By comparability, either $y_i \leq y_j$ or $y_j \leq y_i$. Suppose, wlog, the former is true. Then by our chosen $\varepsilon$, we have $\delta_X(y_i,y_j) > \varepsilon$, which contradicts $\varepsilon$-Tightness. Thus, there is no $y_i,y_j$ such that $y_i \neq y_j$, and so this protocol solves the 0-bounded lattice agreement problem.

    The other direction is trivial since $\varepsilon > 0$.\label{minbound}
\end{proof}

We also show another equivalence, in that the original lattice agreement problem is equivalent to the $\varepsilon$-bounded agreement problem when $\varepsilon \geq D$, where 
\begin{equation*}
    D = \max_{s_1,s_2 \bowtie \{x_i \mid i \in [n]\}} \delta_X(s_1,s_2).
\end{equation*}
Intuitively, $D$ is the longest distance between any two points in the range of values in which the outputs can lie.

\begin{lemma}
    The lattice agreement problem is equivalent to the $D$-bounded lattice agreement problem.\label{uppbnd1}
\end{lemma}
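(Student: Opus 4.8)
The plan is to prove the two directions of the equivalence, in the same ``a protocol for one solves the other'' sense used in Lemma~\ref{0equiv}. The forward direction is immediate: any protocol that solves the $D$-bounded lattice agreement problem satisfies Downward-Validity, Upward-Validity, and Comparability by definition, and these are exactly the requirements of the classical lattice agreement problem, so it already solves lattice agreement. (Note $D \geq 0$ since $\delta_X(s,s)=0$ for any $s \bowtie \{x_i\mid i\in[n]\}$, so the $D$-bounded problem is well-posed.)

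For the converse I would take an arbitrary protocol $\mathcal{P}$ solving classical lattice agreement and show its outputs automatically satisfy $D$-Tightness. Fix an execution with inputs and outputs $I = \{x_i \mid i\in[n]\}$, $Y = \{y_i \mid i\in[n]\}$, and write $S = I$. The first step is to observe that $y_i \bowtie S$ for every $i$: Downward-Validity gives $x_i \leq y_i$ and Upward-Validity gives $y_i \leq \bigsqcup_{j\in[n]} x_j = \bigsqcup S$, so $s = x_i \in S$ witnesses $y_i \bowtie S$.

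The second step is then almost automatic: take any $i,j\in[n]$ with $y_i \leq y_j$. Since $y_i,y_j \bowtie S$ and $y_i \leq y_j$, the pair $(y_i,y_j)$ is one of the (comparable) pairs over which the maximum defining $D$ is taken, hence $\delta_X(y_i,y_j) \leq D$. Therefore $D$-Tightness holds and $\mathcal{P}$ solves the $D$-bounded lattice agreement problem, completing the equivalence. Finiteness of $X$ guarantees the maximum in the definition of $D$ is attained, so $D \in \Ri$ is well defined.

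I do not expect a real obstacle here; the proof is essentially bookkeeping. The only points needing care are (a) deriving $y_i \bowtie S$ precisely from the two validity conditions rather than hand-waving, and (b) reading the maximum in the definition of $D$ as ranging over pairs $s_1,s_2 \bowtie S$ with $s_1 \leq s_2$ (equivalently, with $\delta_X(s_1,s_2)\neq\bot$), so that $\delta_X(y_i,y_j)$ genuinely occurs among the terms of that maximum.
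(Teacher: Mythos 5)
Your proposal is correct and follows essentially the same route as the paper: the $D$-bounded problem trivially implies lattice agreement, and conversely Downward- and Upward-Validity give $y_i \bowtie \{x_j \mid j \in [n]\}$ for every $i$, so every comparable pair of outputs falls within the range of the maximum defining $D$. Your extra care about reading the maximum only over comparable pairs (where $\delta_X \neq \bot$) and about attainment via finiteness of $X$ is a reasonable tightening of the same argument, not a different approach.
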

\begin{proof}
    Let $P$ be a protocol solving the lattice agreement problem. Then, by Downward-Validity $x_i \leq y_i$ for all $i \in [n]$. By Upward-Validity, $y_i \leq \bigsqcup_{j\in [n]} x_j$, so $y_i \bowtie \{x_j \mid j\in [n]\}$ for all $i \in [n]$, so $\delta_X(y_i,y_j) \leq D$ for all $i,j \in [n]$ by maximality of $D$.

    The other direction is trivial since Downward-Validity, Upward-Validity, and Comparability are all achieved in the $D$-bounded lattice agreement problem.
\end{proof}

If we have normality in $\delta_X$, however, then we may attain a stricter upper-bound on $\varepsilon$ for guaranteed by the lattice agreement problem. We will let
\begin{equation*}
    D' = \max_{i\in [n]} \,\delta_X(x_i,\bigsqcup_{j\in[n]} x_j).
\end{equation*}
and we first show that $D'$ is indeed a stronger upper-bound.

\begin{lemma}
    For an input set $\{x_i \mid i \in [n]\}$, $D' \leq D$.
\end{lemma}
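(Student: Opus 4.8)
The plan is to show that every term appearing in the maximum that defines $D'$ also appears among the terms being maximized in the definition of $D$; once that is established, $D' \leq D$ is immediate, since a maximum taken over a subcollection cannot exceed the maximum over the full collection.

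Concretely, I would fix an arbitrary $i \in [n]$ and abbreviate $T = \bigsqcup_{j \in [n]} x_j$. The key observation is that both $x_i$ and $T$ lie in the range $\{s \mid s \bowtie \{x_j \mid j \in [n]\}\}$ over which $D$ is maximized. Indeed, $x_i \leq x_i \leq T$ witnesses $x_i \bowtie \{x_j \mid j \in [n]\}$ (taking the witness element to be $x_i$ itself in the definition of $\bowtie$), and $x_i \leq T \leq T$ witnesses $T \bowtie \{x_j \mid j \in [n]\}$. Moreover $x_i \leq T$, so by property (ii) of the quasi-metric $\delta_X(x_i,T) \neq \bot$; hence $\delta_X(x_i,T) \in \Ri$ and the ordered pair $(x_i,T)$ is a legitimate candidate in the maximum defining $D$. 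Therefore $\delta_X(x_i,T) \leq D$. Since $i$ was arbitrary, taking the maximum over $i \in [n]$ on the left gives $D' = \max_{i\in[n]} \delta_X(x_i,T) \leq D$.

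I do not expect a substantive obstacle here: the only point that needs care is verifying that the pair $(x_i,\bigsqcup_{j} x_j)$ genuinely satisfies the $\bowtie$ condition in both coordinates and is ordered correctly, so that $\delta_X(x_i,\bigsqcup_j x_j)$ is a bona fide element of $\Ri$ rather than $\bot$; after that, the inequality is purely a statement about maxima over nested index sets. It is worth noting that normality of $\delta_X$ plays no role in this particular lemma — it becomes relevant only for the later claim that $D'$, and not merely $D$, serves as the tight upper bound on $\varepsilon$ guaranteed by the (plain) lattice agreement problem.
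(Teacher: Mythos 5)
Your proof is correct and takes essentially the same approach as the paper: both arguments observe that $x_i$ and $\bigsqcup_{j\in[n]} x_j$ each satisfy the $\bowtie$ condition with respect to the input set, so every term in the maximum defining $D'$ is among the terms maximized in $D$. Your version simply spells out the witnesses and the non-$\bot$ check more explicitly than the paper does.
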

\begin{proof}
    Since for all $i\in [n]$, $x_i \bowtie \{x_i \mid i \in [n]\}$ and $\bigsqcup_{j\in[n]} x_j \bowtie \{x_i \mid i \in [n]\}$, we immediately have $D' \leq D$.
\end{proof}

\begin{lemma}
    If $\delta_X$ is normal, then the lattice agreement problem is equivalent to the $D'$-bounded lattice agreement problem.
\end{lemma}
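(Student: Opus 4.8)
The plan is to prove the two directions of the equivalence separately, with only the forward direction requiring real work. The reverse direction—that a protocol for $D'$-bounded lattice agreement also solves plain lattice agreement—is immediate, since $D'$-bounded lattice agreement lists Downward-Validity, Upward-Validity, and Comparability among its requirements, exactly as in the proof of Lemma~\ref{uppbnd1}. So the substance is the forward direction: given any protocol $P$ solving lattice agreement, I want to show that every one of its output tuples automatically satisfies $D'$-Tightness (and hence, since $D' \le D$, this refines Lemma~\ref{uppbnd1} under normality).

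The key step is to observe that the three classical constraints already force the largest output to equal $\bigsqcup_{j\in[n]} x_j$. Indeed, by Comparability the output set $Y$ is a chain, so $\max Y$ exists and dominates every $y_i$; by Downward-Validity each $x_i \le y_i \le \max Y$, hence $\bigsqcup_{j\in[n]} x_j \le \max Y$; and Upward-Validity gives the reverse inequality, so $\max Y = \bigsqcup_{j\in[n]} x_j$. Having pinned this down, I would pick a process $k$ attaining $y_k = \min Y$, note $x_k \le \min Y \le \max Y$ by Downward-Validity, and apply normality of $\delta_X$ to this three-element chain to obtain
\begin{equation*}
    \delta_X(\min Y, \max Y) \;\le\; \delta_X\left(x_k, \bigsqcup_{j\in[n]} x_j\right) \;\le\; D',
\end{equation*}
where the last inequality is just the definition of $D'$. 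Then Lemma~\ref{normgamma} gives that the instance is $\gamma$-compliant with $\gamma = \delta_X(\min Y, \max Y) \le D'$, so $D'$-Tightness holds; since $P$ and the execution were arbitrary, $P$ solves $D'$-bounded lattice agreement.

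I expect the only genuine obstacle to be spotting that $\max Y = \bigsqcup_{j\in[n]} x_j$ is compelled by the classical constraints—this is precisely what lets a single input $x_k$ sit low enough for normality to squeeze $\delta_X(\min Y, \max Y)$ under $\max_i \delta_X(x_i, \bigsqcup_j x_j)$; once that is in hand, the remainder is routine bookkeeping with the quasi-metric axioms and Lemma~\ref{normgamma}. One minor point to handle in passing is that, in the presence of faults, $Y$ and the constraints should be read over the processes that decide, matching the convention already used in Lemmas~\ref{0equiv} and~\ref{uppbnd1}; this does not affect the argument.
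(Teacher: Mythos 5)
Your proof is correct and follows essentially the same route as the paper's: both directions split the same way, and the substantive bound comes from applying normality of $\delta_X$ to a chain running from some input $x_k$ up through the outputs to $\bigsqcup_{j\in[n]} x_j$ and then invoking the definition of $D'$ as a maximum; your detour through Lemma~\ref{normgamma} to reduce everything to the single pair $(\min Y,\max Y)$ is a mild streamlining of the paper's two-case analysis over arbitrary output pairs. One remark: the step you flag as the key obstacle, $\max Y = \bigsqcup_{j\in[n]} x_j$, is not actually needed --- Upward-Validity already gives $\max Y \leq \bigsqcup_{j\in[n]} x_j$, and one extra application of normality to the chain $x_k \leq \min Y \leq \max Y \leq \bigsqcup_{j\in[n]} x_j$ yields $\delta_X(\min Y,\max Y) \leq \delta_X(x_k,\bigsqcup_{j\in[n]} x_j) \leq D'$ --- which is fortunate, since the equality is exactly what can fail if a process crashes before its input reaches anyone.
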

\begin{proof}
    Let $P$ be a protocol solving the lattice agreement problem. Let $x_i$ be an input which maximizes $D'$ (that is, $x_i \in \{x_j \mid j \in [n]\}$ such that $D' = \delta_X(x, \bigsqcup_{j\in [n]} x_j)$). Then, by Downward-Validity, $x_i \leq y_i$. Let $y_j$ be any output, then there are two cases by Comparability. If $y_i \leq y_j$, then $x_i \leq y_j$, so $\delta_X(y_i, y_j) \leq D'$ by normality. If $y_j \leq y_i$, then $\delta_X(y_j,y_i) \leq \delta_X(y_j,\bigsqcup_{j\in [n]} x_j) \leq \delta_X(x_j,\bigsqcup_{j\in [n]} x_j)$ by normality, and $\delta_X(x_j,\bigsqcup_{j\in [n]} x_j) \leq D'$ since $D'$ is maximized by $x_i$.


    Then, for any pair of outputs $y_i \leq y_j$ (or, wlog, $y_j \leq y_i$), we have $\delta_X(y_i,y_j) \leq D'$.

    The other direction is trivial as stated in Lemma \ref{uppbnd1}.
\end{proof}

These equivalences will help us reduce a seemingly continuous set of problems to an equivalent discrete set of problems that is much easier to reason about.

\subsection{Reconciliation Protocols}

Suppose we already have a set of output values $\{y_i \mid i \in [n]\}$ satisfying Downward-Validity, Upward-Validity, and Comparability with respect to some input set $\{x_i \mid i \in [n]\}$. Our goal is to achieve $\varepsilon$-Tightness with at most $f < n$ faults when possible. These algorithms that begin with outputs from standard lattice agreement will be called \emph{reconciliation} protocols, and they are correct w.r.t. $\varepsilon$ when Downward-Validity, Upward-Validity, and Comparability are all maintained in the new outputs, and $\varepsilon$-Tightness is also achieved in the new outputs. 

An instance $(I,Y,Y')$ of a reconciliation protocol then consists of an additional set $Y'$ representing the outputs from reconciliation on $Y$, which satisfies lattice agreement requirements w.r.t. $I$.

We show an important result that allows us to focus merely on reconciliation protocols for the sake of analyzing solvability of the $\varepsilon$-bounded lattice agreement problem.

\begin{theorem}
    There exists a $\varepsilon$-reconciliation protocol with $f$ faults if and only if there exists a protocol solving $\varepsilon$-bounded lattice agreement with $f$ faults.
\end{theorem}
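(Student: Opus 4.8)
The final statement is a biconditional, and I would establish each direction by a short protocol-composition argument; neither direction needs any new machinery. \textbf{From $\varepsilon$-bounded lattice agreement to reconciliation.} This direction is essentially immediate. Given a protocol $P$ that solves $\varepsilon$-bounded lattice agreement with $f$ faults, I would define a reconciliation protocol that, on an instance $(I,Y,\cdot)$, simply discards the supplied outputs $Y$ and has each $p_i$ run $P$ on its own input $x_i$, adopting $P$'s decision as $y_i'$. Since $P$ is correct, the resulting $Y'$ satisfies Downward-Validity, Upward-Validity, and Comparability with respect to $I$, together with $\varepsilon$-Tightness, which is exactly the definition of a reconciliation protocol correct w.r.t. $\varepsilon$; and it tolerates $f$ faults because $P$ does. (If one wanted in addition that $y_i \leq y_i'$, I would instead run $P$ on the set $\{y_i\}$: since $Y$ satisfies the basic constraints we have $x_i \leq y_i$ and $\bigsqcup_i y_i = \bigsqcup_i x_i$, so $P$'s guarantees on input set $\{y_i\}$ still deliver all four constraints with respect to $I$.)

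\textbf{From reconciliation to $\varepsilon$-bounded lattice agreement.} Given an $\varepsilon$-reconciliation protocol $R$ tolerating $f$ faults, the plan is to run two phases in sequence. Phase 1 runs a standard lattice agreement protocol $Q$ tolerating $f$ faults, so that each correct $p_i$ obtains an output $y_i$ with $\{y_i\}$ satisfying Downward-Validity, Upward-Validity, and Comparability w.r.t. $I$; such a $Q$ exists in the models we consider (a flooding protocol in the synchronous case, and the asynchronous constructions of \cite{attiya95,Zheng2018Lattice}). Phase 2 then has each correct $p_i$ invoke $R$ on $(x_i,y_i)$ and decide on $R$'s output $y_i'$. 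By correctness of $R$, the set $\{y_i'\}$ satisfies all four constraints w.r.t. $I$, so the composed protocol solves $\varepsilon$-bounded lattice agreement with $f$ faults.

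\textbf{Main obstacle.} The forward direction has no real content beyond the observation that a reconciliation protocol must preserve the three basic constraints \emph{with respect to the original inputs $I$}, not merely with respect to $Y$---this is what makes the reduction, and its converse, sound. The reverse direction is where the care is needed: I must do the fault bookkeeping for the composition (a crash in either phase is a single crash, there are at most $f$ in total, and since each phase is individually $f$-resilient no correct process ever blocks---this matters in the asynchronous case, where Phase 1 must be wait-free up to $f$ faults so that every correct process reaches Phase 2 holding a valid $y_i$), and I must rely on the solvability of standard lattice agreement with $f$ faults, which I would invoke from prior work rather than reprove. Everything else is routine.
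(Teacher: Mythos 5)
Your proposal is correct and matches the paper's own argument in both directions: reconciliation-to-bounded-agreement by composing a standard lattice agreement protocol with the reconciliation protocol, and the converse by running the $\varepsilon$-bounded protocol and noting its outputs already meet every requirement of a reconciliation protocol. Your extra care about fault bookkeeping across the two phases and the optional variant that preserves $y_i \leq y_i'$ go slightly beyond what the paper writes down, but the underlying approach is the same.
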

\begin{proof}
    If there exists a $\varepsilon$-reconciliation protocol, then a protocol solving $\varepsilon$-bounded lattice agreement by simply running a standard lattice agreement protocol (e.g. synchronous and asynchronous algorithms presented in \cite{attiya95}), followed by the reconciliation protocol. The outputs from the lattice agreement must satisfy Downward-Validity, Upward-Validity, and Comparability, so they are valid inputs to the reconciliation protocol. By definition, the reconciliation protocol must maintain Downward-Validity, Upward-Validity, and Comparability, and it must also achieve $\varepsilon$-Tightness. Therefore, the $\varepsilon$-bounded lattice agreement is solved.

    In the other direction, simply run the $\varepsilon$-bounded lattice agreement protocol. Then, the output values satisfy everything that is needed for a $\varepsilon$-reconciliation protocol, so we are done. 
\end{proof}

This shows that it is sufficient to study reconciliation protocols for the sake of solvability of the $\varepsilon$-bounded lattice agreement problem, and vice versa. This is not the case, however, when attempting to minimize message complexity (as is clear from the proof).

\section{Synchronous Reconciliation}

Solvability in synchronous systems is very straightforward by a generalizable synchronous algorithm for binary consensus.

\begin{algorithm}[H]
\caption{Synchronous Agreement}
\label{alg:example}
\begin{algorithmic}[1]
\STATE $V \gets \{y_i\}$ \COMMENT{Initially just the lattice agreement output}
\FOR{$k = 1$ to $f+1$}
    \STATE Send $\{v \in V \mid P_i \text{ has not sent } v \text{ before}\}$ to all
    \FOR{each $j \in [n]$}
        \STATE Receive $S_j$ from $P_j$
        \STATE $V \gets V \cup S_j$
    \ENDFOR
\ENDFOR
\STATE Decide on $\max(V)$
\end{algorithmic}
\end{algorithm}

\begin{theorem}
    All correct processes decide on the same value. \label{synccorrect}
\end{theorem}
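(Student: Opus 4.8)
The plan is to run the classical ``clean round'' argument for synchronous flooding, adapted to the optimization in Algorithm~\ref{alg:example} that each process only forwards values it has not sent before. As a preliminary, I would note that $\max(V)$ on the last line is well defined: a value enters some local set $V$ only through a message, and messages only ever carry values that were in some process's \emph{initial} $V$, so every local $V$ stays a nonempty subset of $\{y_j \mid j \in [n]\}$, which is totally ordered by Comparability; a nonempty finite chain has a maximum. Next, call a round \emph{clean} if no process crashes during it. Each crashed process crashes during at most one round, so at most $f$ of the $f{+}1$ rounds are non-clean, and hence there is a clean round $k^\ast \in \{1,\dots,f{+}1\}$.

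The heart of the proof is the claim that \textbf{all processes alive at the end of round $k^\ast$ hold the same set $V$}. I would establish this in two halves. First, suppose a value $v$ is in the local set of some process $P_i$ that is alive at the start of round $k^\ast$. Let $r$ be the first round in which $P_i$ both holds $v$ and has not yet sent it; since $P_i$ holds $v$ already at the start of round $k^\ast$, we have $r \le k^\ast$, and $P_i$ sends $v$ to all in round $r$. If $r = k^\ast$, cleanliness means $P_i$ completes the round, so every alive process receives $v$. If $r < k^\ast$, then $P_i$, being alive at the start of round $k^\ast$, completed round $r$ and sent $v$ to all; since crashes are permanent, any process alive at the end of round $k^\ast$ was alive at the end of round $r$ and hence already received $v$. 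Second, any value a live process receives \emph{during} round $k^\ast$ was, by the optimization, in the sender's set at the start of round $k^\ast$, so it is covered by the first half. Together these show the end-of-round-$k^\ast$ sets coincide; call the common set $W$.

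Finally, I would show that agreement persists: by induction on $k \ge k^\ast$, if all live processes have $V = W$ at the end of round $k$, then in round $k{+}1$ every message carries a subset of $W$, so no live process gains or loses an element and all still have $V = W$. Hence at the end of round $f{+}1$ every still-alive process --- in particular every correct process --- has $V = W$ and decides $\max(W)$, the same value. The step I expect to be the main obstacle is the first half of the central claim: because of the ``only forward unsent values'' optimization one cannot simply say ``in a clean round every value is re-broadcast,'' and must instead pin down the round in which each value was first propagated and lean on permanence of crashes to conclude it had already reached everyone who survives round $k^\ast$. The well-definedness of $\max$ and the pigeonhole producing a clean round are routine.
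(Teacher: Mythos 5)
Your proof is correct and follows the same clean-round/pigeonhole strategy as the paper's own proof. It is in fact more careful than the paper's: the paper's one-line claim that in the fault-free round ``all processes must receive the same values'' silently ignores the ``only send unsent values'' optimization (a process does \emph{not} rebroadcast its whole set in the clean round), and that is exactly the gap your first-half argument --- pinning down the round in which each value was first sent and using permanence of crashes to conclude it already reached every survivor --- closes.
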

\begin{proof}
    Since there are $f+1$ rounds and at most $f$ faults, there must (by pigeonhole principle) be a round in which no faults occur. That is, all correct processes send their set of unsent values to every other process, and all correct processes receive the respective sets from all other correct processes. Thus, at the end of that round, all processes must receive the same values, and end with the same $V$. Then, $V$ stays the same for the remaining rounds, so all correct processes finish by deciding on the same value.
\end{proof}

\begin{corollary}
    The synchronous $\varepsilon$-bounded lattice agreement problem is solvable for all $\varepsilon \geq 0$.
\end{corollary}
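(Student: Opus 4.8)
The plan is to show that Algorithm~\ref{alg:example}, run on the outputs $\{y_i\}$ of any standard lattice agreement protocol, is a correct $0$-reconciliation protocol tolerating $f$ faults. The corollary then follows at once: $0$-Tightness gives $\varepsilon$-Tightness for every $\varepsilon \ge 0$ (directly, since the correct outputs coincide and hence have $\delta_X$-distance $0 \le \varepsilon$, or via Lemma~\ref{uppclosureequiv}), and by the earlier equivalence between $\varepsilon$-reconciliation protocols and protocols solving $\varepsilon$-bounded lattice agreement, this produces a synchronous solution to $\varepsilon$-bounded lattice agreement for all $\varepsilon \ge 0$.

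First I would establish the invariant that at every correct process, and in every round, $V \subseteq \{y_j \mid j \in [n]\}$: it is initialized to the singleton $\{y_i\}$ and only grows by taking unions with sets received from other processes, each of which was itself a subset of that process's $V$; an easy induction then shows no value outside the original output set is ever created. Because the original outputs form a chain by Comparability, each $V$ is a finite nonempty chain, so $\max(V)$ is well defined and the value decided by $p_i$, call it $y_i' = \max(V)$, equals $y_j$ for some $j \in [n]$.

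Next I would check the required properties of the new outputs. Comparability and $0$-Tightness are immediate from Theorem~\ref{synccorrect}: all correct processes finish with the same set $V$, hence decide the same value, so any two correct outputs coincide and have $\delta_X$-distance $0$. For Downward-Validity, $y_i$ is never removed from $V$, so $y_i \le \max(V) = y_i'$, and together with the original $x_i \le y_i$ this yields $x_i \le y_i'$. For Upward-Validity, $y_i' = y_j$ for some $j$ and the original Upward-Validity gives $y_j \le \bigsqcup_{k\in[n]} x_k$, hence $y_i' \le \bigsqcup_{k\in[n]} x_k$.

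The main point to be careful about is precisely the invariant $V \subseteq \{y_j \mid j \in [n]\}$ combined with $y_i \in V$: this is what rescues Downward-Validity, since otherwise $\max(V)$ could a priori be incomparable to, or strictly below, some input $x_i$. Everything else is routine bookkeeping layered on Theorem~\ref{synccorrect}, so once the invariant is in hand the corollary is essentially a one-line application of the reconciliation theorem.
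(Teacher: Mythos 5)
Your proof is correct and follows the same route as the paper: treat Algorithm~\ref{alg:example} as a $0$-reconciliation protocol, get $0$-Tightness from Theorem~\ref{synccorrect}, and pass to arbitrary $\varepsilon \ge 0$ via Lemma~\ref{uppclosureequiv} and the reconciliation equivalence theorem. The paper's own proof is just the one-line citation of those two results; your explicit verification of the invariant $V \subseteq \{y_j \mid j \in [n]\}$ and of Downward/Upward-Validity fills in bookkeeping the paper leaves implicit, but it is the same argument.
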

\begin{proof}
    Immediate from Theorem \ref{synccorrect} and Lemma $\ref{uppclosureequiv}$.
\end{proof}

\section{Asynchronous Reconciliation}

While we can solve the $\varepsilon$-bounded lattice agreement problem for all $\varepsilon \geq 0$ synchronously, we certainly cannot solve the 0-bounded lattice agreement problem asynchronously.

\begin{theorem}
    The asynchronous 0-bounded lattice agreement problem is unsolvable for $f \geq 1$. \label{eq0unsolv}
\end{theorem}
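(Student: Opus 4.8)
The plan is to give a direct indistinguishability (partition) argument, using the fact that the $0$-bounded constraint collapses the problem into a form of uniform agreement that is already ruled out asynchronously with one crash. The first step is to record the structural consequence of $0$-Tightness: if a protocol solves $0$-bounded lattice agreement, then by property (i) of the quasi-metric, $\delta_X(y_i,y_j)\le 0$ forces $y_i=y_j$ for all correct $i,j$, and together with Downward- and Upward-Validity this pins every correct output to $\bigsqcup_{j\in[n]}x_j$. So every correct process must output the join of \emph{all} proposed inputs. The second step is to fix a witnessing lattice quasi-metric space: take the diamond $X=\{\bot,a,b,c\}$ with $a,b$ incomparable, $\bot$ least, and $c=a\sqcup b$, equipped with any valid quasi-metric (e.g. $\delta_X(x,y)=0$ if $x=y$, $\delta_X(x,y)=\bot$ if $x\not\le y$, and $\delta_X(x,y)=1$ otherwise; the triangle inequality is immediate since every chain has length at most two). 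The only facts used are $\bigsqcup\{a,a\}=a$, $\bigsqcup\{a,b\}=c$, and $a\neq c$.

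For the core argument, assume for contradiction a protocol $A$ solving asynchronous $0$-bounded lattice agreement with $f\ge 1$. Consider an execution $E_1$ in which $p_1,\dots,p_{n-1}$ propose $a$ and $p_n$ crashes before taking any step (admissible since $f\ge 1$). Since $A$ terminates, $p_1$ decides after some finite prefix $\sigma$ of $E_1$, and by the structural observation its decision is $\bigsqcup_j x_j = a$. Now build $E_2$ in which $p_1,\dots,p_{n-1}$ again propose $a$, $p_n$ proposes $b$ and is correct, but $E_2$ begins with exactly the steps of $\sigma$, deferring every step of $p_n$ and every message from $p_n$ until after $\sigma$; this is possible because no process receives a message from $p_n$ during $\sigma$. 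Process $p_1$ cannot distinguish $E_2$ from $E_1$ through the end of $\sigma$, so it again decides $a$. But $E_2$ (extended to a fair execution that eventually delivers all of $p_n$'s messages) has no crashes, so $A$ must make $p_1$ output $\bigsqcup_j x_j = a\sqcup b = c\neq a$ — a contradiction. For general $n$ this uses only the single crash of $p_n$; the processes $p_2,\dots,p_{n-1}$ act as "witnesses" that behave identically in both runs because they, too, hear nothing from $p_n$ before $\sigma$ ends.

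The main obstacle to making this fully rigorous is the indistinguishability step: one must adopt a precise model of asynchronous executions (configurations, per-link message buffers, admissible/fair schedules) sharp enough to justify that $E_2$'s prefix genuinely coincides with $\sigma$, that $E_2$ extends to an admissible fair execution with at most $f$ crashes, and hence that $A$'s termination and correctness guarantees really do apply to $p_1$'s decision and to the final outputs of $E_2$. Everything else — verifying $\delta_X$ is a quasi-metric and computing the two joins — is routine. (Alternatively, one could phrase this as a reduction: $0$-bounded lattice agreement on the diamond above yields binary consensus via $0\mapsto\bot$, $1\mapsto b$, $\top$-output $\mapsto 1$, $\bot$-output $\mapsto 0$, so asynchronous solvability with $f\ge 1$ would contradict \cite{flp}; but the self-contained partition argument above avoids invoking FLP machinery.)
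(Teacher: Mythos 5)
Your proposal is correct in substance but takes a genuinely different route from the paper. The paper's proof is a two-line black-box reduction: on the two-element lattice $\{0,1\}$, a solution to $0$-bounded lattice agreement yields binary consensus (Validity from Downward/Upward-Validity, Agreement from $0$-Tightness, Termination trivially), contradicting \cite{flp}. You instead give a self-contained indistinguishability argument that never invokes FLP's bivalence machinery, and it works for a good reason: $0$-Tightness plus Downward-Validity pins the common output to dominate \emph{every correct process's} input, so the problem is strictly harder than consensus --- it amounts to computing the exact join of the correct inputs, and a single slow-or-crashed process defeats that by the standard ``you cannot wait for the last process'' partition argument. The trade-off is clear: the paper's proof is shorter but leans on a deep external theorem; yours is longer but elementary and arguably more illuminating about \emph{why} the problem is impossible. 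Two small points to tighten. First, in $E_1$ you must explicitly set $x_n$ (e.g.\ $x_n=a$ or $x_n=\bot$); as written $p_n$'s input is unspecified, and if $x_n=b$ then Upward-Validity only gives $y_1\le c$, so $p_1$'s decision in $E_1$ is not forced to be $a$. Second, your ``structural observation'' that every correct output equals $\bigsqcup_{j\in[n]}x_j$ is overstated: in runs with crashes the common output is only sandwiched between the join of the correct processes' inputs and the join of all inputs. You only apply the claim where the two coincide ($E_2$ is failure-free, and $E_1$ after fixing $x_n=a$), so the argument survives, but the general statement should be weakened.
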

\begin{proof}
    Suppose there exists a protocol to solve the asynchronous 0-bounded lattice agreement problem. We show that such a protocol would solve the FLP asynchronous binary consensus problem presented in \cite{flp}. 
    
    Then, it certainly terminates since every process eventually decides on an output. Let the lattice be $\{0,1\}$ with $0 < 1$. Then, if $x_i = 0$ for all $i \in [n]$, $\bigsqcup_{j\in [n]} x_j = 0$, and so $0 \leq y_i \leq \bigsqcup_{j\in [n]} x_j = 0$ by Downward and Upward-Validity (and similarly if $x_i = 1$ for all $i \in [n]$). Thus we satisfy the Validity requirement of FLP. 
    
    Agreement is guaranteed by 0-Tightness, so this protocol guarantees FLP conditions. Since there may be a crash fault ($f \geq 1$), this is a contradiction by the FLP result \cite{flp}.
\end{proof}

In fact, by Lemma \ref{0equiv}, we cannot solve this problem asynchronously for $\varepsilon < M$ where
\begin{equation*}
    M = \min{\{\delta_X(x,y) \mid x,y \in X, x\leq y\}}
\end{equation*}
While this gives a lower bound for $\varepsilon$, the arbitrary nature of the chosen quasi-metric gives the following fundamental result.

\begin{theorem}
    There does not exist an asynchronous protocol that always solves the $\varepsilon$-bounded lattice agreement problem for any $\varepsilon > 0$. \label{geq0unsolv}
\end{theorem}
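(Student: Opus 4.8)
The plan is to reduce to Theorem~\ref{eq0unsolv} by exploiting the freedom in the choice of quasi-metric. Reading the statement in light of the preceding remark about ``the arbitrary nature of the chosen quasi-metric,'' the phrase \emph{always solves} should be read as ``solves $\varepsilon$-bounded lattice agreement on \emph{every} lattice quasi-metric space''; so it suffices, for each fixed $\varepsilon > 0$, to exhibit one lattice quasi-metric space on which no asynchronous protocol (with $f \geq 1$) can solve the $\varepsilon$-bounded problem.

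First I would fix an arbitrary $\varepsilon > 0$ and build the witness. Take $X = \{0,1\}$ ordered by $0 < 1$, which is a finite grounded lattice with $0 \sqcup 1 = 1$. Define $\delta_X(0,0) = \delta_X(1,1) = 0$, $\delta_X(0,1) = \varepsilon + 1$, and $\delta_X(1,0) = \bot$. The next step is to check that $\delta_X$ is a quasi-metric: axiom (i) holds because the only pairs at distance $0$ are the diagonal pairs; axiom (ii) holds because $\delta_X(x,y) \neq \bot$ precisely when $x \leq y$; and axiom (iii) is essentially vacuous, since any chain $x_1 \leq x_2 \leq x_3$ in $X$ has at most two distinct elements, so one of $\delta_X(x_1,x_2), \delta_X(x_2,x_3)$ equals $\delta_X(x_1,x_3)$ and the other is $0$. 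Hence $(X,\leq,\sqcup)$ with $\delta_X$ is a genuine lattice quasi-metric space.

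Then I would invoke Lemma~\ref{0equiv}. On this space the only pair with $x < y$ is $0 < 1$, and $\delta_X(0,1) = \varepsilon + 1 > \varepsilon$, so the hypothesis $x < y \implies \delta_X(x,y) > \varepsilon$ is met; therefore the $\varepsilon$-bounded lattice agreement problem on this space is equivalent to the $0$-bounded lattice agreement problem on it. By Theorem~\ref{eq0unsolv}, the asynchronous $0$-bounded problem is unsolvable for $f \geq 1$, so the asynchronous $\varepsilon$-bounded problem is unsolvable on this space. Since $\varepsilon > 0$ was arbitrary, no asynchronous protocol solves $\varepsilon$-bounded lattice agreement on every lattice quasi-metric space, which is the claim.

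I do not expect a deep obstacle: the argument is mainly a packaging of Lemma~\ref{0equiv} and Theorem~\ref{eq0unsolv}. The point that needs care is the quantifier hidden in ``always'': the statement is false if the lattice quasi-metric space is held fixed (a one-element lattice, or more generally any space whose minimum nonzero distance $M$ satisfies $M \leq \varepsilon$, is unproblematic), so the proof must make explicit that the adversary picks the space \emph{after} $\varepsilon$ is fixed and scales the single nonzero distance past $\varepsilon$. A secondary subtlety is the implicit $f \geq 1$ hypothesis inherited from Theorem~\ref{eq0unsolv}, which should be stated alongside the theorem.
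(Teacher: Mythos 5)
Your proposal is correct and uses the same core construction as the paper: a witness lattice in which every pair of distinct comparable elements is more than $\varepsilon$ apart. The difference is in how the argument is finished. The paper's proof stops at ``$\varepsilon$-Tightness is certainly impossible to achieve'' on such a lattice, which as stated is not quite sufficient: on your two-point lattice a protocol could in principle satisfy $\varepsilon$-Tightness by having every correct process output the same element (e.g.\ all output $\bigsqcup_j x_j$), which also satisfies both validity conditions and Comparability; the real obstruction is that forcing identical outputs is consensus, which is impossible asynchronously with $f \geq 1$. Your route through Lemma~\ref{0equiv} and Theorem~\ref{eq0unsolv} supplies exactly this missing step, so your version is the more complete one. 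Your two flagged subtleties are also on point: the theorem is only meaningful under the reading that the adversary chooses the lattice quasi-metric space after $\varepsilon$ is fixed (otherwise Lemma~\ref{uppclosureequiv} plus a solvable instance refutes it), and the $f \geq 1$ hypothesis, implicit in the paper's statement, must be carried along since it is what Theorem~\ref{eq0unsolv} needs.
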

\begin{proof}
    It is easy to construct a lattice and set of inputs for which all elements in the join-closed sub-lattice generated by the inputs are all more than $\varepsilon$ units apart. Here, $\varepsilon$-Tightness is certainly impossible to achieve.
\end{proof}

Observe that Theorem \ref{geq0unsolv} (and the following Corollary) is over all lattices, however it does not say anything about existence of protocols for \emph{specific} lattices. Soon, we will prove an impossibility result on the upper-bound depending on not only the lattice, but the particular instance in the lattice as well. 

\begin{corollary}
    The asynchronous $\varepsilon$-bounded lattice agreement problem is unsolvable for all $\varepsilon \geq 0$ with $f \geq 1$.
\end{corollary}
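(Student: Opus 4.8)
The plan is to combine the two impossibility results just established. Theorem~\ref{eq0unsolv} already disposes of the case $\varepsilon = 0$: the $0$-bounded asynchronous lattice agreement problem is unsolvable for $f \geq 1$ because it would solve FLP binary consensus. So it remains to handle $\varepsilon > 0$, and this is exactly what Theorem~\ref{geq0unsolv} gives us. Thus the corollary is essentially a repackaging of these two theorems under a single statement, and I would present it as a short case split on whether $\varepsilon = 0$ or $\varepsilon > 0$.

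The one subtlety worth spelling out is the quantifier structure. The phrase ``unsolvable for all $\varepsilon \geq 0$'' should be read as: for every $\varepsilon \geq 0$, there is \emph{no} protocol that always solves the $\varepsilon$-bounded problem (over all lattice quasi-metric spaces) with $f \geq 1$. For $\varepsilon = 0$ this is Theorem~\ref{eq0unsolv} directly. For $\varepsilon > 0$ this is Theorem~\ref{geq0unsolv}, whose proof produces, for any given $\varepsilon$, a specific lattice and input set where all elements of the join-closure of the inputs are pairwise more than $\varepsilon$ apart, making $\varepsilon$-Tightness unsatisfiable while Downward- and Upward-Validity force distinct outputs when the inputs disagree. (The $f \geq 1$ hypothesis is inherited from the FLP-based argument in the $\varepsilon=0$ case; in the $\varepsilon>0$ case the obstruction is purely combinatorial and does not even need a fault, but stating it with $f \geq 1$ is consistent and harmless.)

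So the proof I would write is: ``If $\varepsilon = 0$, apply Theorem~\ref{eq0unsolv}. If $\varepsilon > 0$, apply Theorem~\ref{geq0unsolv}. In either case no such protocol exists.''

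I do not anticipate a real obstacle here, since all the work has been done in the preceding theorems; the only thing to be careful about is making the scope of ``always solves'' explicit so that the corollary is not misread as a claim about every fixed lattice instance — indeed the excerpt explicitly flags that Theorem~\ref{geq0unsolv} is a statement quantified over lattices and says nothing about specific ones, and the corollary should be understood in the same sense.
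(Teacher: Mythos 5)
Your proposal matches the paper's proof exactly: the corollary is derived immediately by combining Theorem~\ref{eq0unsolv} for the case $\varepsilon = 0$ with Theorem~\ref{geq0unsolv} for $\varepsilon > 0$. Your additional remarks on the quantifier scope are a reasonable clarification but not needed for correctness.
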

\begin{proof}
    Immediate from Theorems \ref{eq0unsolv} and \ref{geq0unsolv}.
\end{proof}

This result shows us that, even for viable values of $\varepsilon$, there is no protocol relying on $\varepsilon$ that can always guarantee a valid set of outputs per the problem. Furthermore, this shows that we cannot make an algorithm for arbitrary tightness stronger than the upper-bound given by standard lattice agreement (for arbitrary lattices and quasi-metrics). Therefore, we instead present algorithms for achieving better $\varepsilon$-Tightness, but not a particular value of $\varepsilon$. That is, we would like to attain stronger \emph{upper-bounds} on compliance.



\begin{theorem}
    There is no asynchronous reconciliation protocol that guarantees $\varepsilon$-Tightness for any fixed $\varepsilon < D'$ when $f \geq 1$. \label{uppboundimposs}
\end{theorem}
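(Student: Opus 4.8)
The plan is to run an FLP-style indistinguishability argument on a carefully chosen reconciliation instance in which, whatever the protocol does, one process gets ``pinned'' to its own input while a second process is forced all the way up to the join of the inputs, so that the two surviving outputs end up exactly $D'$ apart. It suffices to treat $f=1$, since a protocol tolerating $f\ge 1$ faults in particular tolerates one. For the lattice I would take $X=2^{\{1,\dots,N\}}$ with $\delta_X(A,B)=|B\setminus A|$ when $A\subseteq B$ and $\delta_X(A,B)=\bot$ otherwise; one checks this is a normal quasi-metric. Use $n=2$ processes with inputs $x_1=\{1\}$ and $x_2=\{2,\dots,N\}$, so $\bigsqcup_j x_j=\{1,\dots,N\}$ and $D'=\delta_X(x_1,\bigsqcup_j x_j)=N-1$, attained at $p_1$. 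Feed the reconciliation protocol the lattice-agreement outputs $y_1=\{1\}$ and $y_2=\{1,\dots,N\}$ (a legal comparable chain over these inputs, namely the one where $p_1$ saw only itself). Suppose toward a contradiction that an asynchronous reconciliation protocol $P$ guarantees $\varepsilon$-Tightness on this instance for some fixed $\varepsilon<D'$.

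The first step is to pin $p_1$: in the execution $E_1$ in which every message from $p_2$ is delayed past the moment $p_1$ decides, $p_1$ must decide $\{1\}$. Because $P$ tolerates one crash, $p_1$ cannot wait indefinitely for $p_2$, so it decides as a function of its local history $(x_1,\,y_1,\,\text{nothing received})$ alone. But this very same local history arises in the reconciliation instance where both processes have input $\{1\}$ and both receive lattice-agreement output $\{1\}$ (a legitimate instance), in which $\bigsqcup_j x_j=\{1\}$ and Upward-Validity forces every reconciled output to equal $\{1\}$. By determinism, $p_1$ therefore decides $\{1\}$ in $E_1$ as well.

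The second step splices $E_1$ with the symmetric execution $E_2$ in which every message from $p_1$ is delayed past the moment $p_2$ decides (again $p_2$ must decide, by one-crash tolerance, using a local history that carries no information about $x_1$). In the combined execution $E$ no process crashes, so the final outputs must satisfy all four requirements for the true instance; $p_1$ decides $\{1\}$ as in $E_1$, and $p_2$ decides some $y_2'$ as in $E_2$. Downward-Validity gives $x_2=\{2,\dots,N\}\subseteq y_2'$, so $y_2'\not\subseteq\{1\}$, and Comparability with $p_1$'s output $\{1\}$ then forces $\{1\}\subseteq y_2'$; combined with Upward-Validity this pins $y_2'=\{1,\dots,N\}=\bigsqcup_j x_j$. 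Hence $\delta_X(y_1',y_2')=\delta_X(\{1\},\bigsqcup_j x_j)=D'>\varepsilon$, contradicting $\varepsilon$-Tightness. Since $N$ is arbitrary, the gap below $D'$ that cannot be closed is itself unbounded over instances.

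The step I expect to be the main obstacle is making the pinning of $p_1$ watertight. I need the alternate instance used to fix $p_1$'s decision to be a genuine reconciliation instance — its lattice-agreement outputs must themselves form a legal comparable chain over its inputs, which is exactly why I take them both equal to $\{1\}$ — and I need $p_1$'s entire local state (its hard-wired knowledge of $n$ and $f$, its step count, and the empty set of received messages) to coincide in $E_1$ and in the alternate run up through the moment $p_1$ commits, so that determinism genuinely forces the same decision. A secondary point to spell out is the termination of $p_2$ in $E_2$ despite never hearing from $p_1$: this is precisely where the hypothesis $f\ge 1$ is used, and it is the reason the statement fails for $f=0$, where the synchronous-style protocol of the previous section applies.
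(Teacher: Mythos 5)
Your proof is correct and follows essentially the same route as the paper's: delay all messages from the process holding the maximum value, observe that the minimum process's local history is then indistinguishable from a run of a legitimate instance in which every input equals the minimum (where Upward-Validity forces it to stay put), and conclude that the surviving pair of outputs remains exactly $D'$ apart. Your version simply instantiates this on a concrete powerset lattice with $n=2$ and spells out the indistinguishability and splicing steps more carefully than the paper's sketch, which is a welcome tightening rather than a different argument.
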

\begin{proof}

    Let $Y = \{y_i \mid i\in [n]\}$ and suppose
    \begin{equation*}
        \delta_X(\min{Y},\max{Y}) = \max_{i\in [n]} \,\delta_X(x_i,\bigsqcup_{j\in[n]} x_j) = D'.
    \end{equation*}
    Then, it must be the case that such a protocol changes the value of $\min{Y}$ or $\max{Y}$ in their respective processes. Suppose $\min{Y} = y_1 = \ldots = y_{n-1} = x_1 = \ldots = x_{n-1}$ and $\max{Y} = y_n = x_n = \bigsqcup_{j \in [n]} x_j$. 

    We cannot move $y_n = \max{Y}$ to any other point in the lattice since either Downward or Upward-Validity would be violated.

    We cannot move any $y_1,\ldots,y_{n-1}$ down in the lattice, since then Downward-Validity would be violated. Then, to decrease $\varepsilon$ by changing $\min{Y}$, we must move all $y_1,\ldots,y_{n-1}$ up the lattice. 
    
    Since the system is asynchronous, we may repeatedly delay $p_n$'s message, meaning that transitions in the state of values are made without depending on $y_n$. Then, if some process $p_i$ changes their value $y_i$ to a different $y_i'$, then this transition could also occur if $x_n = y_n = y_1 = \ldots$. This would then violate Upward-Validity.

    Therefore, an adversary in this situation can always enforce a transition to an illegal state by delaying $p_n$'s messages.
\end{proof}

While Theorem \ref{uppboundimposs} seems similar to the previous unsolvability results, there is an important distinction. This theorem states that we cannot even attain a guarantee-able \emph{upper-bound} on $\gamma$-compliance better than $D'$.

This motivates a heuristic approach to $\varepsilon$-bounded lattice agreement, where we would like to attain such an upper-bound (or something even stronger) on average or with high probability.

\subsection{Heuristic Algorithm}

If we have a normal quasi-metric, then improving the $\gamma$-compliance from $\max_{i,j \in [n]} \delta_X(y_i,y_j)$ (by Lemma \ref{gammaunique}) is simply a matter of either moving all minimum values up the lattice or moving all the maximum values down the lattice (while maintaining validity).

Moving maximum values down can certainly cause us to violate Downward-Validity, however, since the original input $x_i$ of a maximum value $y_i$ could be equal to it (that is, $x_i = y_i$). Moving minimum values up is therefore a much better option, since we only need to stay below $\bigsqcup_{j \in [n]} x_j$. Without knowing the inputs, however, this may be challenging. The only thing we do know in a reconciliation protocol is that all values satisfy Upward-Validity, so we can rely on these values.

Here, we present a very simple algorithm for attaining better $\gamma$-compliance with high probability (depending on the chosen value of $k$, and of course the set of initial values).

\begin{algorithm}[H]
\caption{${\rm DR}(k)$}
\label{alg:heuristicalgo}
\begin{algorithmic}[1]
\STATE $y_i \gets \text{Output from lattice agreement}$
\FOR{$r = 1$ to $k$}
    \STATE Send $(y_i,r)$ to all
    \STATE Receive $n-f$ $(\cdot, r)$ messages. Let $V_r$ be the set of received values.
    \STATE $y_i \gets \max{(V_r \cup \{y_i\})}$
\ENDFOR
\STATE Output $y_i$
\end{algorithmic}
\end{algorithm}

Let $y_i^r$ be the value of $y_i$ in process $p_i$ before executing the for loop for round $r$. Then, the initial value (output from lattice agreement) is $y_i^1$, and we let $y_i^{k+1}$ be the final output value. Then, let $A_r = \{y_i^r \mid i \in [n]\}$.

\begin{lemma}
    For all $1 \leq m \leq k$, $A_{m+1} \subseteq A_m$. \label{containprop}
\end{lemma}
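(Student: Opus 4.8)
The plan is to show that every value appearing at the start of round $m+1$ already appeared at the start of round $m$, i.e. that the algorithm never introduces a genuinely new lattice element. The key observation is that line 5 of Algorithm~\ref{alg:heuristicalgo} only ever assigns to $y_i$ a value of the form $\max(V_r \cup \{y_i\})$, where $V_r$ is a set of values received in round $r$; since the join operator is not used anywhere, the new $y_i$ is literally \emph{one of} the values in $V_r \cup \{y_i\}$, each of which is some process's value at the start of round $r$.

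Concretely, I would fix $m$ with $1 \le m \le k$ and take an arbitrary element $z \in A_{m+1}$. By definition $z = y_i^{m+1}$ for some $i \in [n]$. Now $y_i^{m+1}$ is the value of $y_i$ after executing the loop body for round $m$, so $y_i^{m+1} = \max(V_m \cup \{y_i^m\})$, where $V_m$ is the set of values $p_i$ received in round $m$. Every message received in round $m$ carries a value $y_j^m$ for some $j$ (the value $p_j$ held when it executed line 3 in round $m$), so $V_m \subseteq A_m$; also $y_i^m \in A_m$. Hence $V_m \cup \{y_i^m\} \subseteq A_m$, and since the maximum of a finite nonempty subset of a chain is an element of that subset, $z = \max(V_m \cup \{y_i^m\}) \in A_m$. (That the relevant set is a chain follows from Comparability together with the fact, established inductively, that the values stay within the chain generated by the original outputs — or one can simply note that $\max$ here is taken over a set that is totally ordered because the initial outputs form a chain and every subsequent value is one of them.) Since $z$ was arbitrary, $A_{m+1} \subseteq A_m$.

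I expect the only subtle point to be justifying that $\max(V_m \cup \{y_i^m\})$ is well-defined, i.e. that this set is totally ordered so that its maximum is a member of the set rather than a newly formed join. This is really a simultaneous induction: $A_1$ is a chain by Comparability of the lattice agreement outputs, and if $A_m$ is a chain then every value at the start of round $m+1$ is an element of $A_m$ (by the argument above), so $A_{m+1} \subseteq A_m$ is also a chain. One could either fold this induction into the lemma's proof or cite it as an immediate consequence of the structure of the algorithm; either way the containment $A_{m+1} \subseteq A_m$ and the chain property come out together, and no nontrivial computation is involved.
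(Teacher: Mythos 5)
Your proof is correct; in fact the paper states Lemma~\ref{containprop} without any proof at all, so there is nothing to compare against. Your argument --- that line 5 only ever selects an element of $V_m \cup \{y_i^m\} \subseteq A_m$ rather than forming a new join, together with the simultaneous induction showing each $A_m$ remains a chain (so that $\max$ is genuinely a member of the set) --- is exactly the justification the paper implicitly relies on, and you correctly isolate the only subtle point, namely well-definedness of $\max$ as a selection rather than a join.
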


\begin{theorem}
    Downward-Validity, Upward-Validity, and Comparability are all maintained.
\end{theorem}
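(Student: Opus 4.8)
The plan is to argue by induction on the round number $r \in \{1,\dots,k+1\}$, carrying a joint invariant: (a) the set $A_r$ is a chain, i.e.\ every pair $y_i^r, y_{i'}^r$ is comparable; (b) $A_r \subseteq A_1$, so every value held at round $r$ is one of the original lattice-agreement outputs; and (c) for $r \geq 2$, $y_i^r \geq y_i^{r-1}$ for every $i$, i.e.\ the value held by each process is nondecreasing in the round. The base case $r=1$ is immediate: $A_1$ is exactly the set of outputs of the standard lattice agreement protocol, which satisfies Comparability (giving (a)) and trivially satisfies $A_1 \subseteq A_1$; clause (c) is vacuous.

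For the inductive step, assume (a)--(c) hold up to round $r$. First observe that the update $y_i^{r+1} = \max(V_r \cup \{y_i^r\})$ in line~5 of Algorithm~\ref{alg:heuristicalgo} is well-defined: every element of $V_r$ is the round-$r$ value of some process, so $V_r \cup \{y_i^r\} \subseteq A_r$, and by invariant (a) this is a finite chain, hence has a maximum, and that maximum is one of its own elements. Consequently $y_i^{r+1} \in A_r \subseteq A_1$, which establishes (b) for $r+1$ (this is precisely the content of Lemma~\ref{containprop}, which may alternatively be invoked directly). Since every $y_i^{r+1}$ then lies in the chain $A_1$, any two such values are comparable, giving (a) for $r+1$. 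Finally, $y_i^r \in V_r \cup \{y_i^r\}$, so $y_i^{r+1} = \max(V_r \cup \{y_i^r\}) \geq y_i^r$, which is (c).

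With the induction complete, the three required properties follow at once for the final outputs $y_i^{k+1}$. Comparability is invariant (a) at $r = k+1$. For Upward-Validity, invariant (b) gives $y_i^{k+1} \in A_1$, so $y_i^{k+1} = y_j^1$ for some $j$, and the original output $y_j^1$ already satisfies $y_j^1 \leq \bigsqcup_{\ell \in [n]} x_\ell$. For Downward-Validity, chaining (c) over all rounds yields $x_i \leq y_i^1 \leq y_i^2 \leq \cdots \leq y_i^{k+1}$, where the first inequality is Downward-Validity of the original lattice-agreement outputs.

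The one point that genuinely needs care is the well-definedness of the $\max$ in line~5, since the maximum of a set of pairwise-incomparable lattice elements is meaningless; this is exactly why Comparability must be threaded through the argument as an invariant rather than proved only at the end. Everything else is a straightforward consequence of the two structural facts that each process's value only moves \emph{up} the lattice and only ever equals a value already present among the initial outputs $A_1$. (The receipt of $n-f$ messages in line~4 is always possible because at most $f$ processes crash, but this concerns liveness and is irrelevant to the safety argument above.)
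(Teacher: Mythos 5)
Your proof is correct and follows essentially the same route as the paper's: monotonicity of each $y_i^r$ gives Downward-Validity, and the containment $A_{k+1} \subseteq A_1$ (Lemma~\ref{containprop}) gives Upward-Validity and Comparability. Your version is somewhat more careful, since you actually prove the containment by induction (the paper states Lemma~\ref{containprop} without proof) and you justify the well-definedness of the $\max$ in line~5 by threading Comparability through as an invariant, a point the paper leaves implicit.
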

\begin{proof}
    For each $y_i^r$, we have $y_i^r \leq y_i^{r+1}$, and since $x_i \leq y_i^1$, we then have $x_i \leq y_i^{k+1}$, so Downward-Validity is maintained.

    By Lemma \ref{containprop}, we have that $A_{k+1} \subseteq A_1$, so for all $i\in [n]$, $\exists j \in [n]$ such that $y_i^{k+1} = y_j^1$. Since the initial values all satisfy Upward-Validity, we then have that $y_i^{k+1}$ satisfies Upward-Validity. This same argument also follows to show that Comparability is maintained as well.
\end{proof}

\begin{theorem}
    For any instance $(I,Y,Y')$ of this algorithm, if $(I,Y)$ is $\gamma$-compliant, then $(I,Y')$ is $\gamma'$-compliant for some $\gamma' \leq \gamma$.
\end{theorem}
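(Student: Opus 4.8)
The plan is to combine the set-containment guarantee of Lemma~\ref{containprop} with the characterization of $\gamma$-compliance in Lemma~\ref{gammaunique}, so that the claim reduces to the essentially trivial observation that deleting elements from a chain cannot increase its diameter. First I would record that, by the preceding theorem, $Y'$ still satisfies Downward-Validity, Upward-Validity, and Comparability relative to $I$, so $(I,Y')$ is a genuine lattice agreement instance; Lemma~\ref{gammaunique} then says it is $\gamma'$-compliant for the \emph{unique} value $\gamma' = \max\{\delta_X(y'_i,y'_j) : i,j\in[n],\ y'_i\le y'_j\}$, while by hypothesis $\gamma = \max\{\delta_X(y_i,y_j) : i,j\in[n],\ y_i\le y_j\}$. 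The whole task is then to show $\gamma'\le\gamma$.

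Next I would unwind the notation of the algorithm: the for-loop of ${\rm DR}(k)$ executes $k$ times and $y_i^{k+1}$ is by definition the final output, so $Y'=A_{k+1}$ and $Y=A_1$. Iterating Lemma~\ref{containprop} gives $A_{k+1}\subseteq A_k\subseteq\cdots\subseteq A_1$, hence $Y'\subseteq Y$ as sets of lattice elements. Now take the comparable pair $y'_i\le y'_j$ realizing $\gamma'$. Since $Y'\subseteq Y$, both $y'_i$ and $y'_j$ already occur among the elements of $Y$ and are comparable there, so $\delta_X(y'_i,y'_j)$ is one of the quantities over which the maximum defining $\gamma$ is taken; therefore $\gamma'=\delta_X(y'_i,y'_j)\le\gamma$, which is exactly the statement. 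Conceptually the point is that every process only ever overwrites its value with a larger value it has already received, so ${\rm DR}(k)$ never creates a new lattice element — it can only prune elements from the ``bottom'' of the chain $Y$.

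I expect this to be essentially routine; the only care points are (i) being explicit that the maxima defining $\gamma$ and $\gamma'$ range over comparable pairs (equivalently, pairs on which $\delta_X\neq\bot$), which is legitimate because Comparability holds for both $Y$ and $Y'$, and (ii) the bookkeeping that identifies $Y'$ with $A_{k+1}$ and $Y$ with $A_1$. The main ``obstacle'', such as it is, is merely making the bridge from $Y'\subseteq Y$ to Lemma~\ref{gammaunique} precise. If one is willing to assume a normal $\delta_X$ there is an even shorter route: ${\rm DR}(k)$ leaves $\max Y$ untouched (no process can ever exceed it) and can only raise the minimum, so $\min Y\le\min Y'\le\max Y'=\max Y$, and Lemma~\ref{normgamma} together with normality gives $\gamma'=\delta_X(\min Y',\max Y')\le\delta_X(\min Y,\max Y)=\gamma$.
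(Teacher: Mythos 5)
Your proposal is correct and follows the same route as the paper: the paper's (one-line) proof also invokes Lemma~\ref{containprop} to conclude that the maximum pairwise distance can only decrease from $A_m$ to $A_{m+1}$, which is exactly your $Y'=A_{k+1}\subseteq A_1=Y$ argument combined with the characterization of compliance from Lemma~\ref{gammaunique}. Your write-up simply makes explicit the bookkeeping the paper leaves implicit.
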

\begin{proof}
    By Lemma \ref{containprop}, the maximum distance between any pair of values can only decrease from $A_{m}$ to $A_{m+1}$.
\end{proof}

Observe that none of these proofs uses the normality of $\delta_X$. Therefore, the algorithm still works for non-normal quasi-metrics, however there is a much greater likelihood of improving $\gamma$-compliance with a normal quasi-metric since we can guarantee improvement when minimum values are moved up.

Simulating this algorithm for large $n$ is computationally difficult, so we instead provide an easier to compute non-distributed abstract semantics that gives a good approximation of the improvement rates with ${\rm DR}(k)$.

\subsection{Simplified Approximate Model}

Since, for all $i \in [n]$, $y_i^r \leq y_i^{r+1}$, we only need the minimum values to change in order to attain a $\gamma$-compliance better than $D'$. This gives rise to a simple model that allows us to analyze any given instance of ${\rm DR}(k)$ as a discrete-time absorbing Markov chain.

The idea is to represent states by 0s and 1s, where a 0 represents a minimum value (that is, some $y_i^r$ such that $y_i^r = \min{\{y_j^1 \mid j \in [n]\}}$) and 1 represents a non-minimum value. The goal is to have all 1s and no 0s, so that there are no more minimum values (with respect to the initial values).

At each round of the algorithm, each 0 picks $n-f$ of the current state's 0s and 1s, and changes to a 1 only if it picked at least 1 (that is, it takes the max). 

For any given process, we assume that the probability of receiving a message from another process on round $r$ is uniformly distributed. That is, there is an equal chance for $p_i$ to receive from $p_j$ for all $j \in [n]$ at each round. Thus, in the simplified model, this means that each bit has an equal chance of being chosen by each 0 independently.

We also assume that crashes are independent, and each process has an equal chance $p_f$ to crash (among the first $f$ crashes). To simplify this model, we say that line 3 of ${\rm DR}$ is atomic, so processes cannot send a message to some processes and crash before sending it to the rest of the processes.

\begin{definition}
    The \emph{state space} $\mathbb{S}_{n,f}$ (or simply $\mathbb{S}$ when $n,f$ are clear from context) for $n$ processes and at most $f$ failures consists of all vectors $\langle s_1,\ldots,s_n \rangle$ such that $s_i \in \{0,1,\bot\}$ for all $i \in [n]$, and there are no more than $f$ components equal to $\bot$ and at least one component equals 1.
\end{definition}

For convenience, we say that $\bot < 0 < 1$.

\begin{definition}
    A state $S' = \langle s_1',\ldots,s_n' \rangle \in \mathbb{S}$ is \emph{reachable} from state $S = \langle s_1,\ldots,s_n \rangle \in \mathbb{S}$ (written $S_1 \rightarrow S_2$) if
    \begin{itemize}
        \item $\forall i \in [n] : s_i = \bot \implies s_i' = \bot$
        \item $\forall i \in [n] : s_i = 1 \implies s_i' \neq 0$.
    \end{itemize}
\end{definition}

This definition corresponds to the reachable states after running one round of ${\rm DR}$.

We now define the set of ``good'' states, in which compliance was improved.

\begin{definition}
    The set of improved states $\mathbb{F} \subseteq \mathbb{S}$ is the set $\mathbb{F} = \{\langle s_1,\ldots,s_n \rangle \in \mathbb{S} \mid (\forall i \in [n] : s_i \neq 0) \vee (\forall i \in [n] : s_i \neq 1) \}$.
\end{definition}

We now provide an algorithm for the abstract semantics with $k$ rounds beginning at state $S$, and traversing the state reachability graph (as determined by the random $n-f$ choices).

\begin{algorithm}
\caption{Simplified Approximate Model Simulation}
\begin{algorithmic}[1]
\STATE $x \gets 0$
\FOR{$r \gets 1$ to $k$}
    \STATE $A \gets \text{copy of $S$}$
    \FOR{$i \gets 1$ to $n$}
        \IF{$x < f$ and $p_f$ probability}
                \STATE $S[i] \gets \bot$
                \STATE $x \gets x + 1$
                \STATE \textbf{continue}
        \ENDIF
        \IF{$S[i] \neq 0$}
                \STATE \textbf{continue}
        \ENDIF
        \STATE $C \gets \text{choose $n-f$ random values from $S$}$
        \STATE $A[i] \gets \max(C \cup \{S[i]\})$
    \ENDFOR
    \STATE $S \gets A$
\ENDFOR
\STATE \textbf{return} $S \in \mathbb{F}$?
\end{algorithmic}
\end{algorithm}

This algorithm was simulated for 1000 runs with $n=1000$. The following tables show results for when $f=200$ and $p_f=0.06$. Table \ref{rand1} gives success rates over the 1000 runs with randomized inputs. Table \ref{wc1} gives success rates with worst-case input, which is all 0s and one 1.

\begin{table}[htbp]
\caption{Success rates of random input with $n=1000,p_f=0.06$}
\begin{center}
\begin{tabular}{|c|c|c|c|c|}
\hline
     & $k=2$ & $k=3$ & $k=4$ \\\hline
    $f=200$ & 17.1\% & 90.3\% & 99.9\% \\\hline
    $f=800$ & 16.8\% & 89.6\% & 99.3\% \\\hline
\end{tabular}
\label{rand1}
\end{center}
\end{table}

\begin{table}[htbp]
\caption{Success rates of worst-case input with $n=1000,p_f=0.06$}
\begin{center}
\begin{tabular}{|c|c|c|c|c|}
\hline
     & $k=2$ & $k=3$ & $k=4$ & $k=5$ \\\hline
    $f=200$ & 0.0\% & 41.3\% & 97.6\% & 100.0\% \\\hline
    $f=800$ & 0.0\% & 5.4\% & 84.0\% & 98.9\% \\\hline
\end{tabular}
\label{wc1}
\end{center}
\end{table}

We also tested $p_f = 0.5$ to $p_f=0.8$ in 0.1 for $f=800$ and observed that, for $k=2$, 0\% of simulations succeeded in achieving a better upper-bound, however for $k=3$, 100\% of simulations succeeded. This is to be expected, since using more of the fault budget allows either all 1s to be eliminated, or enough 0s to be eliminated so that a 1 is guaranteed among $n-f$ choices.

Overall, we see that only a very small number of rounds in relation to the number of processes is needed for very high probability to improve the upper-bound. For 1000 processes, the average case (randomized input) gives that we only need to run 4 rounds to get a high probability of success (5 for a near guarantee, even in worst-case input), which is nearly negligible compared to the number of rounds needed for the lattice agreement itself. 

We conjecture that the same probability of improvement is the same at $k=5$ for any number of processes $n$, failures $f$, and chance of failure $p_f$, with minimal error. That is, we conjecture that such a high probability of improvement is achieved with ${\rm DR}(5)$ in any system (under the assumptions of our original system model). This is because the same amount ``spreading'' per round will occur proportional to the number of processes, causing it to converge at the same round, even if the first few rounds differ when $n,f,p_f$ differ. 

Furthermore, this would mean that our heuristic algorithm only needs $O(1)$ (constant) rounds to achieve improved $\gamma$-compliance, and the constant number of rounds is very small (5, as per the conjecture).

\section{Conclusion}

In this paper, we proposed a new constraint on the standard lattice agreement problem by requiring outputs to be ``close'' to each other w.r.t. a quasi-metric on the lattice. We show that this new problem whose range is over $\Ri$ can be reduced to a discrete set of equivalence classes on this range. An upper-bound on tightness is derived for standard lattice agreement, and we show that these upper-bounds are stubborn in the presence of asynchrony. For synchronous systems, we show that the problem is easily solvable for all desired levels of tightness. For asynchronous systems, we provide a variety of impossibility results, and one which states that the upper-bound achieved in standard lattice agreement cannot be improved with certainty. This led us to present a heuristic algorithm for attaining improved tightness with high probability in asynchronous systems. We then modeled this heuristic algorithm with an approximation and simulated this model to obtain an approximation on the probabilities of improving tightness with $k$ rounds. Finally, we conjectured that only a constant number ($k=5$) of rounds are necessary to achieve high probability of improvement in the presence of any number of processes, faults ($f < n$), and any fault probability.

\bibliographystyle{plain}
\bibliography{refs}

\end{document}